\documentclass[a4paper,12pt]{amsart}

\usepackage{xcolor}

\usepackage{tikz}
\usetikzlibrary{shapes.geometric,positioning}

\usepackage[margin=3cm]{geometry}
\setlength{\parskip}{3pt}

\newtheorem{thm}{Theorem}[section]

\newtheorem{lem}[thm]{Lemma}
\newtheorem{cor}[thm]{Corollary}

\theoremstyle{definition}
\newtheorem{defn}[thm]{Definition}

\newcommand{\lcut}{{\mathcal{LCUT}}}

\newcommand{\TBN}{{T\negthinspace B\negthinspace N}}

\title{The space of tree-based phylogenetic networks}
\author{Mareike Fischer}
\author{Andrew Francis}
\address{Institute of Mathematics and Computer Science, University of Greifswald, Germany}
\email{email@mareikefischer.de}
\address{Centre for Research in Mathematics and Data Science, Western Sydney University, Australia}
\email{a.francis@westernsydney.edu.au}

\date{\today}
\begin{document}

\begin{abstract}
Phylogenetic networks are generalizations of phylogenetic trees that allow the representation of reticulation events such as horizontal gene transfer or hybridization, and can also represent uncertainty in inference.  A subclass of these,
\emph{tree-based} phylogenetic networks, have been introduced to capture the extent to which reticulate evolution nevertheless broadly follows tree-like patterns.  Several important operations that change a general phylogenetic network have been developed in recent years, and are important for allowing algorithms to move around spaces of networks; a vital ingredient in finding an optimal network given some biological data. A key such operation is the Nearest Neighbor Interchange, or NNI. While it is already known that the space of unrooted phylogenetic networks is connected under NNI, it has been unclear whether this also holds for the subspace of tree-based networks. In this paper we show that the space of unrooted tree-based phylogenetic networks is indeed connected under the NNI operation. We do so by explicitly showing how to get from one such network to another one without losing tree-basedness along the way. Moreover, we introduce some new concepts, for instance ``shoat networks'', and derive some interesting aspects concerning tree-basedness. Last, we use our results to derive an upper bound on the size of the space of tree-based networks.  
\end{abstract}
\maketitle

\section{Introduction} 
Phylogenetic networks have become widely studied structures in the mathematics of evolution, because they capture a realistic range of evolutionary events beyond speciation, which itself is encapsulated elegantly via phylogenetic trees.  In particular, phylogenetic networks are able to represent explicit events such as horizontal gene transfer and hybridization, as well as the presence of uncertainty. Phylogenetic networks, like trees, have also been widely studied using geometric approaches that consider the set of networks as a space in which one may move between the objects by operations that change a feature of the graph.  Such operations, prominent examples of which use the \textit{nearest neighbor interchange} (NNI), \textit{subtree prune and regraft} (SPR), and \emph{tree bisection and reconnection} (TBR), generalized from trees, are valuable in computational applications because it is often necessary to search the space of networks to find one that is optimal with respect to some specific criterion~\cite{robinson1971comparison,allen2001subtree,huber2016transforming}.  Furthermore, such operations may be used to define proximity measures to tree-based networks, extending the notion of tree-basedness to the full class of phylogenetic networks~\cite{fischer2019how}.

At the same time, motivated by various biological and mathematical considerations, various classes of phylogenetic network have been intensively studied.  The focus of this paper is the class of \textit{tree-based networks}, that are essentially ``trees with additional edges'' that pass between the tree edges~\cite{francis2015phylogenetic,francis2018tree,jetten2016nonbinary,hendriksen2018tree}.  
While the space of phylogenetic networks of a given ``tier'' (defined below) is connected under  NNI~\cite{huber2016transforming}, 
we show in this paper that the  space of tree-based unrooted phylogenetic networks is also connected under NNI moves. 
That is, it forms a connected subspace of the space of phylogenetic networks.  Likewise, the diameter of the space of unrooted phylogenetic networks has been bounded in~\cite{francis2018bounds}; here we show that for a given tier, the diameter of the space of unrooted tree-based phylogenetic networks is of order $\mathcal O(n^2)$, where $n$ is the number of leaves.

We begin by setting out the necessary definitions, including a new class of (tree-based) phylogenetic networks that we call \textit{shoat} networks.   In Section~\ref{s:NNI.among.TBNs} we describe circumstances in which an NNI move on a tree-based network will produce another tree-based network (not all do), and show that it is possible to move between various classes of tree-based network using NNI moves, while staying tree-based.  Our main result, the connectedness of the space of tree-based networks, is proved in Section~\ref{s:TBN.connectedness}; we finish with a discussion and further questions in Section~\ref{s:discussion}.

While this manuscript was in preparation, a paper containing a statement similar to one of our main results appeared on the arXiv~\cite[Theorem~4.10]{janssen2019rearrangement}. The result in that paper relates to unrooted networks that differ from ours in that they permit parallel edges, i.e. they are multigraphs. However, in the context of phylogenetic networks, it is more common (and arguably biologically more natural) to consider simple graphs~\cite{pons2018tree,fischer2018nonbinary,francis2018tree,steel2016phylogeny,gambette2012quartets}. 
In contrast, our work takes a different approach to the question, and it is not clear whether the approach in~\cite{janssen2019rearrangement} can be adapted to deal with networks as we define them here, i.e. without parallel edges.

\section{Definitions and background}\label{s:definitions}

A {binary} unrooted phylogenetic network on a set $X$ (typically a set of species or taxa) is a connected simple graph whose vertices are degree 1 or 3, and whose degree 1 vertices (leaves) are {bijectively} labelled by the elements of $X$.  Some of our results extend to the non-binary case, in which non-leaf vertices may have degree greater than 3, and we will remark on that where it arises. In the following, whenever there is no ambiguity, we use the term network to refer to an unrooted phylogenetic network.
Note that the special case of an acyclic phylogenetic network is called a phylogenetic tree.

Throughout this manuscript, we will  assume that $|X|\ge 2$, and that $N$ is ``proper''. A proper network is one for which all components obtained by removing a cut edge or cut vertex contain at least one element of $X$.

In the following, we denote by $V^1(N)$ the set of degree $1$ vertices in $N$, i.e. the set of leaves $V^1(N)=X$, and $\mathring{V}=V\setminus V^1$ denotes the set of inner vertices of $N$.
If $k$ is minimal such that the deletion of ${k}$ edges of ${N}$ would turn ${N}$ into a tree (i.e. a connected acyclic graph), we say that ${N}$ has \emph{tier} ${k}$. Note that the tier does not depend on $N$ being a phylogenetic network -- in fact, the tier of a connected graph can be defined analogously, and for technical reasons, we need this later on in this manuscript. We denote the set of tier-$k$ phylogenetic networks on $n$ leaves by $N(n,k)$.

The \textit{triangle operation}, introduced in~\cite{huber2016transforming},  allows the replacement of a vertex with a triangle (a cycle of length three) to go up a tier, and the reverse to go down a tier.

More formally, the ``blow-up'' triangle operation $\Delta^+$ (that raises the tier of a network by 1), replaces an internal, degree 3 vertex $v$, and its incident edges $\{v,w_1\}$, $\{v,w_2\}$, $\{v,w_3\}$, by three vertices $v_1,v_2,v_3$ and six new edges: $\{v_1,w_1\}$, $\{v_2,w_2\}$, $\{v_3,w_3\}$, $\{v_1,v_2\}$, $\{v_2,v_3\}$, $\{v_3,v_1\}$.  Inversely, the ``collapse'' triangle operation $\Delta^-$ (lowering the tier by 1), replaces a triangle (a 3-cycle) and the edges outside the cycle but incident to its vertices  by a single vertex and three incident edges.  That is, given a 3-cycle of vert  ices $\{v_1,v_2,v_3\}$ and edges $\{v_1,v_2\}$, $\{v_2,v_3\}$, $\{v_3,v_1\}$, and three incident edges $\{v_1,w_1\}$, $\{v_2,w_2\}$, $\{v_3,w_3\}$, replace these by a single vertex $v$ and three edges $\{v,w_1\}$, $\{v,w_2\}$, $\{v,w_3\}$.  We denote these operations $\Delta^+(N,v)$ and $\Delta^-(N,\{v_1,v_2,v_3\})$, for $v$ an internal vertex and $\{v_1,v_2,v_3\}$ a 3-cycle in the network $N$.

A {\it blob} of a network (or, more generally, of a graph) is a maximal connected subgraph that has no cut edge (if such a blob consists of only one vertex, it is called {\it trivial}). Note that in a binary phylogenetic network, blobs cannot contain any cut vertices (as all cut vertices in a binary network are incident to a cut edge~\cite[Lemma 8]{fischer2018nonbinary}).
A phylogenetic network is called {\it simple} if it contains at most one non-trivial blob.

A \emph{support tree} $T$ of a network $N$ is a spanning tree of $N$ satisfying $V^1(T)=V^1(N)=X$, that is, whose leaf set coincides with the leaf set $X$ of $N$.  If $N$ contains such a support tree $T$, it is called \emph{tree-based}. Note that a support tree $T$ of $N$ is not necessarily a phylogenetic tree as it may contain degree-2 vertices. The space of tree-based networks in tier $k$ on $n$ leaves is denoted $\TBN(n,k)$.

Another concept that we need in the following, recently introduced in~\cite{fischer2018classes}, is the \emph{leaf cut graph} $\lcut(N)$ of a proper network $N\in N(n,k)$, with $|V(N)|\ge 3$, 
which is the graph $G$ obtained from $N$ by deleting all leaves and their incident edges. Note that this may result in some vertices of degree 2 and -- e.g. if $N$ is a tree -- even new leaves not labelled by $X$, which we do \emph{not} remove.  

Our main results require the notions of \emph{lineal} tree-based networks and \emph{shoat} networks.

A tree-based network is called \emph{lineal} if it has a support tree consisting of a single path $p$ between two leaves, with paths from $p$ to other leaves of length one.  For instance, a tree-based network that is \textit{pseudo-Hamiltonian} (in the sense of~\cite{francis2018bounds}: a network whose $\mathcal{LCUT}(N)$ 
graph~\cite{fischer2019how} has a Hamiltonian cycle, cf. also ~\cite{fischer2018classes}) will be lineal if there are two vertices that are connected to leaves, and that are adjacent on the Hamiltonian cycle. 

The second new family of graphs that we define are called \emph{shoat networks}\footnote{Because of their close resemblance to the juvenile boars that frequent the streets of northern Germany.}.   These are a subclass of the pseudo-Hamiltonian, tree-based phylogenetic networks, defined below.

\begin{defn}\label{d:shoat}
A {binary} {tier $k\ge 1$} phylogenetic network $N\in\TBN(n,k)$ is a \emph{shoat network} if it has a Hamiltonian 
$\mathcal{LCUT}(N)$ graph with the properties that:
\begin{enumerate}
    \item there are two leaves $x,y\in X$ whose adjacent interior vertices $a$ and $b$ are adjacent to each other, and such that there is a path $p$ from $x$ to $y$ that visits all interior vertices of $N$; 
    \item the interior vertices of $N$ are partitioned into sets $V_L$, $V_M$, $V_R$ and $\{a,b\}$, with properties:
    \begin{itemize}
        \item the path $p$ consists (in order) of $x,a$, elements of $V_L$, elements of $V_M$, elements of $V_R,b,y$;
        \item $|V_L|,|V_R|= k-1$, $|V_M|= n-2$; 
        \item each vertex in $V_L$ is adjacent to its neighbouring vertices on $p$ and one vertex in $V_R$;
        \item each vertex in $V_R$ is adjacent to its neighbouring vertices on $p$ and one vertex in $V_L$;
        \item each vertex in $V_M$ is adjacent its neighbouring vertices on $p$, and a leaf in $X\setminus\{x,y\}$.
    \end{itemize}
\end{enumerate}   
\end{defn}
Note, shoat networks can be defined in the nonbinary setting by allowing $|V_L|,|V_R|\le k-1$ and $|V_M|\le n-2$, and allowing vertices in $V_L$ and $V_R$ to also be adjacent to one or more leaves.

An illustration of this definition is given in Figure~\ref{f:shoat}.

\begin{figure}[ht]
    \centering
\includegraphics[width=.6\textwidth]{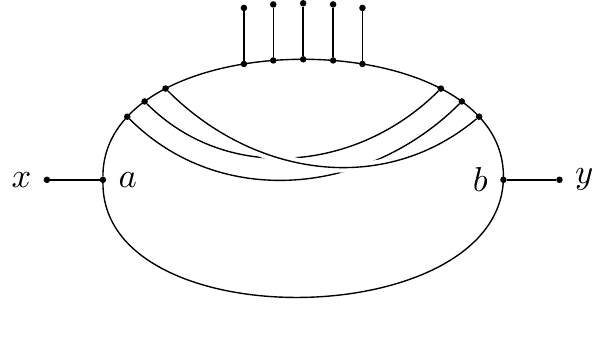}
    \caption{An example of a nonbinary shoat network, with $k=4$ and $n=7$. Here $|V_L|=|V_R|=3$, $|V_M|=5$. }
    \label{f:shoat}
\end{figure}

The \emph{nearest neighbor interchange (NNI)} is a replacement of a path in the network with an alternative path, and was defined for unrooted phylogenetic networks in~\cite{huber2016transforming}:

\begin{defn}[NNI]
Let $N$ be a phylogenetic network in which $(a,b,c,d)$ is a path for which neither $\{a,c\}$ nor $\{b,d\}$ is an edge.  The NNI operation on this path replaces it with the path $(a,c,b,d)$: the edges $\{a,b\}$ and $\{c,d\}$ are deleted, and edges $\{a,c\}$ and $\{b,d\}$ are added.
\end{defn}

\section{NNI moves among tree-based networks}\label{s:NNI.among.TBNs}

The main result of this paper, Theorem~\ref{t:TBN.NNI.connected}, proves that tree-based networks are connected under NNI moves.  In this section,  we prove some preliminary results about the effect of NNI moves on tree-based networks, {which will all be needed for the proof of this main theorem,} including showing that some different subclasses of tree-based networks are connected. We begin by proving that certain NNI moves preserve tree-basedness.

\begin{lem}\label{l:tree-based-preserving-moves}
If $N$ is a tree-based network with support tree $T$, then an NNI move on the path $(a,b,c,d)$ will produce another tree-based network if either:
\begin{enumerate}
    \item the edges $\{a,b\},\{b,c\}$, and $\{c,d\}$ are all in $T$;
    \item $\{b,c\}$ is in $T$ but $\{a,b\}$ and $\{c,d\}$ are not; or
    \item $\{b,c\}$, $\{c,d\}$, and $\{c,d'\}$ are in $T$ for some other vertex $d'$, but $\{a,b\}$ is not.
\end{enumerate}
\end{lem}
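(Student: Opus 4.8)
The plan is to certify that each NNI move yields a tree-based network by exhibiting an explicit support tree $T'$ of the resulting network $N'$, obtained by a local modification of the given support tree $T$. It suffices to produce a spanning tree of $N'$ whose set of degree-$1$ vertices is exactly $X$ (which automatically also shows $N'$ is connected). The organising observation is that an NNI move preserves the degree of every vertex, since each of $a,b,c,d$ loses exactly one incident edge and gains exactly one; in particular $V^1(N')=X$, so I only need to arrange that the modification of $T$ introduces no new degree-$1$ interior vertex. Case~(2) is then immediate: as $\{a,b\}$ and $\{c,d\}$ are not edges of $T$, the only edges deleted by the move lie outside $T$, so $T$ survives intact as a spanning tree of $N'$ with $V^1(T)=X$.

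For case~(1) I would take $T'=(T\setminus\{\{a,b\},\{c,d\}\})\cup\{\{a,c\},\{b,d\}\}$. Using the unique-path property of the tree $T$, deleting the two edges $\{a,b\}$ and $\{c,d\}$ splits $T$ into exactly three components: the part containing $a$, the part containing $b$ and $c$ (joined by the surviving edge $\{b,c\}$), and the part containing $d$. The two new edges $\{a,c\}$ and $\{b,d\}$ reconnect these three pieces, so $T'$ is a connected spanning subgraph of $N'$ having the same number of edges as $T$ (two deleted, two added), hence a spanning tree. Since each of $a,b,c,d$ merely exchanges one incident $T$-edge for another, all four keep their $T$-degree, and therefore $V^1(T')=V^1(T)=X$.

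Case~(3) uses the simpler modification $T'=(T\setminus\{\{c,d\}\})\cup\{\{b,d\}\}$, which does not use the newly added edge $\{a,c\}$ at all. Deleting $\{c,d\}$ splits $T$ into the component of $c$ (which still contains $b$) and that of $d$, and the new edge $\{b,d\}$ reconnects them, giving a spanning tree of $N'$; here I would note that $\{b,d\}$ is genuinely an available new edge because the NNI move is assumed valid. The step that carries the whole argument, and that I expect to be the only real obstacle, is the leaf-set condition rather than connectivity: I must check that no interior vertex is stranded as a degree-$1$ vertex. This is precisely where the third hypothesis enters — since all of $\{b,c\},\{c,d\},\{c,d'\}$ lie in $T$, the vertex $c$ has degree $3$ in $T$ and hence degree $2$ in $T'$, so it does not degenerate into a new leaf; without the edge $\{c,d'\}$ the deletion of $\{c,d\}$ would leave $c$ with degree $1$. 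As $b$ and $d$ each only swap an incident edge, $V^1(T')=X$, completing the case.
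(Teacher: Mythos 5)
Your proof is correct and follows essentially the same route as the paper's: in each case you exhibit the same explicit modification of the support tree (keep $T$ unchanged in case (2), swap both edges in case (1), swap only $\{c,d\}$ for $\{b,d\}$ in case (3)) and verify connectivity, acyclicity, and that no interior vertex becomes a new leaf. Your treatment is in fact slightly more explicit than the paper's at the two delicate points (the three-component reconnection in case (1) and the role of $\{c,d'\}$ in keeping $c$ from degenerating to a leaf in case (3)).
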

\begin{proof}\mbox{}
\begin{enumerate}
    \item Suppose $T$ is a support tree for the network $N$.  If all edges in the path $(a,b,c,d)$ are in $T$, shown as bold dashed and solid lines in Figure~\ref{f:NNI-tree-base-path}, then the effect of the NNI produces a new support tree $\tilde T$ that has edges $\{a,c\}$ and $\{b,d\}$ instead of $\{a,b\}$ and $\{c,d\}$, as shown in the right side of the Figure.  It is clear that $\tilde T$ is still a support tree for $N$ because it still covers all vertices in $N$; it has no additional leaves beyond those of $T$; and it has not generated any cycles (there are still unique paths between all vertices $a,b,c,d$ in $\tilde T$, as there were in $T$).

\begin{figure}[ht]
\includegraphics[width=12cm]{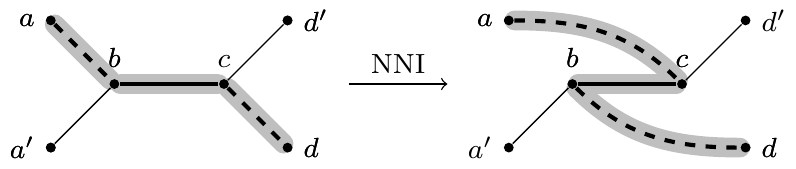}
\caption{An NNI move, such as this one on the path $a,b,c,d$, can preserve tree-based-ness in certain contexts (Lemma~\ref{l:tree-based-preserving-moves}). }
\label{f:NNI-tree-base-path}
\end{figure}

\item If on the other hand neither $\{a,b\}$ nor $\{c,d\}$ are in the support tree $T$, but $\{b,c\}$ is, then there must be another edge incident to $b$ that is in $T$ (clearly, $b$ is not a leaf in $N$ because it is contained in the path $(a,b,c,d)$, i.e. it has degree $>1$ in $N$ and thus also cannot be a leaf in $T$).  Likewise there is an edge in $T$ incident to $c$ other than $\{b,c\}$, for the same reasons.  Without loss of generality suppose the edges $\{a',b\}$ and $\{c,d'\}$ are in $T$, shown in Figure~\ref{f:NNI-tree-base-path} as solid lines.  Then the NNI on the path $(a,b,c,d)$ has no effect at all on the spanning tree $T$, but simply shifts around the attachment edges that pass between vertices of the support tree $T$.  Thus, $T$ remains a support tree for $N$.

\item Finally, if $(b,c,d)$ is a path in $T$ as described, and $\{c,d'\}$ another edge in $T$, then since $b$ is not a leaf in $N$, there is another edge in $T$ incident to it other than $\{a,b\}$ (because $\{a,b\}$ is not contained in $T$ by assumption in this case, but $b$ cannot be a leaf of $T$ as it is not a leaf in $N$, either): say without loss of generality that it is $\{a',b\}$.  Then the effect of the NNI move on $(a,b,c,d)$ is to shift the points that $a$ and $d$ connect to the path $(a',b,c,d')$ in $T$ from $b$ to $c$ and from $c$ to $b$ respectively.  These shifts do not cause any vertex to become uncovered by the tree, and cannot create cycles.
\end{enumerate}
Thus, in all cases, the NNI move sending $(a,b,c,d)$ to $(a,c,b,d)$ changes the tree-based network $N$ to another tree-based network.
\end{proof}

Note, other scenarios may or may not preserve tree-basedness.  For instance, if the path $(a,b,c,d')$ in Figure~\ref{f:NNI-tree-base-path} was in a support tree for $N$ but the edges $\{a',b\}$ and $\{c,d\}$ were not, then the NNI move shown in the figure could turn $b$ into a leaf of the spanning tree, which would then contain edges $\{a,c\}$, $\{c,d'\}$, and $\{b,c\}$.  Whether it did in fact or not would depend on other features of the network outside the local frame of this NNI move. 

Examples of NNI moves that make a tree-based network not tree-based are plentiful.
For instance, there are exactly two level-5 non-tree-based binary networks~\cite{fischer2018nonbinary}, and so there are many NNI moves from each of them that lead to tree-based networks.  Consequently each of those NNI moves in reverse makes a tree-based network non-tree-based.

In the remainder of this section we prove several lemmas that show that a tree-based network can be transformed to any shoat network by NNI moves that stay within $\TBN(n,k)$. It is important to note that all these NNI moves can of course be reverted -- i.e. the same arguments can be used to go back from said shoat network to the original tree-based network while staying within $\TBN(n,k)$.

\begin{lem}\label{l:simple.to.lineal}
A network in $\TBN(n,k)$ can be transformed to a lineal tree-based network using only NNI moves  within $\TBN(n,k)$.
\end{lem}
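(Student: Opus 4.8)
The plan is to work throughout with a fixed support tree $T$ of the given network $N\in\TBN(n,k)$ and to reshape it, via NNI moves that stay within $\TBN(n,k)$, until it becomes the support tree of a lineal network. First I would reduce the statement to a purely tree-combinatorial goal. Write $\mathring T$ for the subtree of $T$ induced on the interior vertices $\mathring V$; this is a tree, since deleting the leaves of a tree leaves a connected subtree. Because every leaf of $N$ has degree $1$ and hence lies at distance exactly $1$ from its unique interior neighbour, and because $T$ is spanning, a support tree is lineal precisely when $\mathring T$ is a path: the two endpoints of that path necessarily carry a leaf (otherwise they would have $T$-degree $1$ and be leaves of $T$), giving the two ends $x,y$ of the path $p$, and every remaining leaf hangs off an interior vertex of $p$ at distance $1$. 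Thus it suffices to transform $T$, by tree-based-preserving NNI moves, into a support tree whose interior subtree $\mathring T$ is a path.

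I would then straighten $\mathring T$ by induction on its number of branch vertices (interior vertices of $\mathring T$-degree $3$; a tree is a path iff it has none). The key observation is that an NNI on a path $(a,b,c,d)$ in which $a$ is a \emph{leaf} of $N$ and $\{a,b\},\{b,c\},\{c,d\}\in T$ falls under case~(1) of Lemma~\ref{l:tree-based-preserving-moves}, and its effect on $\mathring T$ is to decrease $\deg_{\mathring T}(c)$ by one while increasing $\deg_{\mathring T}(b)$ by one (the leaf $a$ merely migrates from $b$ to $c$, and $d$'s two changes cancel). Choosing $c$ to be a branch vertex and $b$ an adjacent $\mathring T$-degree-$1$ vertex therefore removes one branch vertex without creating a new one — this is exactly the rotation underlying the classical fact that any tree can be turned into a caterpillar by NNI. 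When the pendant edge available for the rotation is a reticulation edge rather than a leaf edge (i.e. $\{a,b\}\notin T$), I would instead invoke case~(2) or case~(3) of Lemma~\ref{l:tree-based-preserving-moves}, which produce the same degree shift on $\mathring T$ while preserving tree-basedness.

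The main obstacle is the constraint that each NNI must be a legitimate move on the \emph{simple} graph $N$: the two edges it introduces, $\{a,c\}$ and $\{b,d\}$, must not already be present. Since $a$ has degree $1$, the edge $\{a,c\}$ is automatically new, so the only danger is that $\{b,d\}$ coincides with one of the $k$ reticulation edges of $N$. I expect the heart of the argument to be showing that such a blockage can always be removed: using case~(2) moves — which leave $T$, and hence $\mathring T$ and all progress made so far, unchanged, while sliding a reticulation-edge endpoint along a tree edge — one can reposition the offending reticulation edge out of the way before performing the rotation. Making this relocation always available, and checking that it does not itself create a forbidden parallel edge, so that the finitely many reticulation edges can never permanently obstruct the straightening, is the delicate bookkeeping I anticipate spending the most effort on.

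Finally, once $\mathring T$ has been reduced to a path, the reduction of the first paragraph shows that $N$ has become lineal. Every move used was an NNI preserving the number of leaves, the cyclomatic number (hence the tier $k$), and — by Lemma~\ref{l:tree-based-preserving-moves} — tree-basedness, so the entire trajectory stays inside $\TBN(n,k)$; and since each individual NNI is reversible, the same moves also witness the transformation in the opposite direction.
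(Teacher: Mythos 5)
Your reduction to making $\mathring T$ a path is sound, and the degree bookkeeping for the rotation is correct, but the proof has a genuine gap exactly where you flag it: you never establish that the NNI moves you need are \emph{legal} in the simple graph $N$, i.e.\ that the new edge $\{b,d\}$ is not already one of the $k$ reticulation edges. Your proposed remedy --- first sliding the offending reticulation edge out of the way with case-(2) moves --- is not carried out and is not routine: those repositioning moves are themselves NNIs subject to the same no-parallel-edge constraint, and you would need an argument that the process never gets stuck (say, inside a dense blob where many of the candidate new edges are already present) and that it terminates without undoing earlier progress. A secondary gap: the one rotation you describe needs a branch vertex of $\mathring T$ adjacent to a $\mathring T$-degree-one vertex, which need not exist (e.g.\ a spider with three legs of length two), so the appeal to ``the classical caterpillar fact'' silently imports further move types, each facing the same legality obstacle.

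The paper dissolves precisely this difficulty with an extremal choice that your fixed-$T$ setup forgoes: among \emph{all} support trees of $N$, take one whose longest path $p$ is as long as possible. Then for a vertex $v_i$ of $p$ carrying a branch of length greater than one toward a leaf, the first vertex $w_1$ on that branch cannot be adjacent in $N$ (not merely in $T$) to $v_{i-1}$ or $v_{i+1}$, since otherwise swapping one tree edge for that adjacency would yield a support tree with a strictly longer maximal path. Hence the NNI on $(v_{i-1},v_i,w_1,w_2)$ is automatically legal, it lengthens $p$ while shortening the branch, and Lemma~\ref{l:tree-based-preserving-moves}(1) preserves tree-basedness; iterating terminates in a lineal network. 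To salvage your branch-vertex induction you would need a comparable extremal or potential-function argument guaranteeing legality; as written, the proof is incomplete.
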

\begin{proof}
Let $N$ be a tree-based network. Of all support trees of $N$, choose a tree $T$ which has the longest maximal length path $p$. Let this maximal length path consist of vertices $v_1,\dots,v_t$ in $T$. Suppose there is a vertex $v_i$ on the path for which there is a path in $T\setminus p$ to a leaf of length greater than 1 (note $i\neq 1$ or $t$, by maximality of $p$, as this implies that both $v_1$ and $v_t$ must be leaves and thus have no neighbours outside the path).  As long as $N$ is non-lineal, such a vertex $v_i$ must exist.  We argue that there is an NNI move that extends the length of the maximal path, and reduces the length of the path from $v_i$ to a leaf, as follows.    

Let $w_1$ and $w_2$ be the first two vertices along the path of length greater than 1 from $v_i$ in $T$, with $w_1$ adjacent to $v_i$. Note that this necessarily implies that $w_1$ cannot be adjacent to either $v_{i-1}$ or $v_{i+1}$ in $N$, because otherwise, the path $v_1,\dots,v_t$ could be extended by 
replacing $(v_{i-1},v_i,v_{i+1})$ by $(v_{i-1},w_1,v_i,v_{i+1})$ (or by $(v_{i-1},v_i,w_1,v_{i+1})$, respectively). This would turn $T$ into a support tree $T'$ of $N$ with a longer maximal path, contradicting the choice of $T$. 

So as $w_1$ is not adjacent to either $v_{i-1}$ or $v_{i+1}$ in $N$, we can perform a legal NNI move on the path $(v_{i-1},v_i,w_1,w_2)$, to create the path $(v_{i-1},w_1,v_i,w_2)$, as shown in Figure~\ref{f:extend.max.path}. Then the modified tree $T$ has a new maximal path $(v_1,\dots,v_{i-1},w_1,v_i,\dots,v_t)$, and the length of the other path from $v_i$ to a leaf has been reduced in length by 1. Furthermore, it is still tree-based, by Lemma~\ref{l:tree-based-preserving-moves}(i).
\begin{figure}[ht]
\includegraphics[width=.7\textwidth]{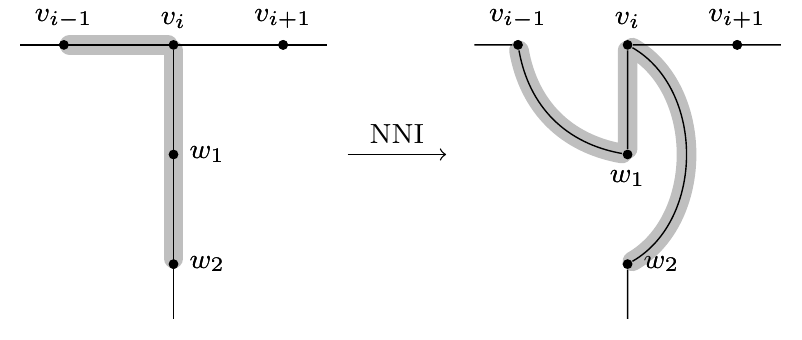}
\caption{An NNI move that incorporates an edge from $v_i$ into a maximal path $(\dots,v_{i-1},v_i,v_{i+1},\dots)$ within the support tree $T$ of a tree-based network. Edges in $N$ that are not in $T$ are not shown.}
\label{f:extend.max.path}
\end{figure}

This process can be continued until the support tree $T$ of $N$ takes the form of a single long path with edges from it connecting directly to leaves of $N$ --- that is, until the resulting tree-based network is lineal. 
\end{proof}

\begin{lem}\label{l:lineal.to.pseudo.H} 
A lineal phylogenetic network in $\TBN(n,k)$, for $k\ge 1$, can be transformed into a shoat network in $\TBN(n,k)$  by NNI moves within $\TBN(n,k)$.
\end{lem}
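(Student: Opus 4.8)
The plan is to use the fact that a lineal network and a shoat network have support trees of the same shape --- a caterpillar --- so that the whole task reduces to reordering, by NNI moves, the interior vertices along the backbone and redirecting the $k$ non-tree edges. Fix notation for a lineal $N$: its support tree $T$ is a path $p=(x=v_0,v_1,\dots,v_m,v_{m+1}=y)$ with the other $n-2$ leaves attached by single edges, and $m=n+2k-2$. Exactly $n-2$ of the interior vertices $v_1,\dots,v_m$ carry a pendant leaf (\emph{leaf vertices}); the remaining $2k$ have degree $2$ in $T$ and thus carry exactly one incident non-tree edge (\emph{free vertices}). Since $N$ is a simple graph, the $k$ non-tree edges form a perfect matching on the free vertices, no pair of which is adjacent on $p$. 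In this language a shoat is exactly the arrangement in which the free vertices fill the two end-blocks of positions $\{1,\dots,k\}$ and $\{m-k+1,\dots,m\}$, the leaf vertices fill the middle, the matching sends the left block to the right block, and the two extreme free vertices $v_1,v_m$ (adjacent to $x$ and $y$) are matched to each other.

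Two NNI moves are available, both furnished by Lemma~\ref{l:tree-based-preserving-moves} and both staying within $\TBN(n,k)$ while keeping the network lineal. A \emph{transposition} interchanges two backbone-adjacent interior vertices, carrying their pendant leaf or non-tree edge along; this is a move of type~(i) in Lemma~\ref{l:tree-based-preserving-moves} on three consecutive backbone edges, it replaces $T$ by another caterpillar support tree, and it is legal exactly when neither of the two edges it creates is already present. A \emph{rewiring} takes two backbone-adjacent free vertices and exchanges their matching partners; this is a move of type~(ii) in Lemma~\ref{l:tree-based-preserving-moves}, it leaves $T$ unchanged, and it is legal exactly when it does not duplicate an existing non-tree edge.

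The construction has two phases. In Phase~A I orient each matched pair, assigning its currently-leftmost vertex to the left end-block and its partner to the right end-block; as the matching is perfect this places exactly $k$ free vertices on each side, and by construction the partner of every left-assigned vertex lies to its right and the partner of every right-assigned vertex lies to its left. I then slide the left-assigned vertices into positions $1,\dots,k$ leftmost-first, and the right-assigned vertices into positions $m-k+1,\dots,m$ rightmost-first. The orientation is exactly what makes every required transposition legal: a left-assigned vertex moving left meets only leaf vertices (which carry no non-tree edge) and right-assigned vertices, and in each case one checks that neither would-be edge of the transposition can already exist, precisely because each free vertex's partner lies on the side away from the block it is heading to; the right-hand sweep is symmetric. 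After Phase~A the free vertices occupy the two end-blocks, the leaf vertices fill the middle, and the matching runs from the left block to the right block. In Phase~B I apply rewirings inside the left block --- always legal, since both new non-tree edges would join the far-apart left and right blocks --- to move the matching partner of $v_m$ step by step down to position $1$; this makes $v_1$ and $v_m$ matched and leaves a bijection between the remaining left- and right-block vertices, which is precisely a shoat by Definition~\ref{d:shoat}.

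The step I expect to be the genuine obstacle is guaranteeing that no move is ever forced to create a parallel edge: this simple-graph constraint is exactly what separates our setting from the multigraph one, and it is why a naive ``gather everything together'' strategy fails. The concrete danger is a \emph{trapped} configuration, in which the vertex one wants to slide is straddled by a single matched pair so that both sideways transpositions would duplicate that pair's non-tree edge. The whole purpose of the orientation in Phase~A is to exclude this: by always moving a vertex away from its own partner and by taking the extreme unplaced vertex at each step, the straddling pattern simply cannot arise, so Phase~A needs no case analysis. The remaining care is with degenerate parameters, notably $n=2$ (where $V_M$ is empty and, for example, a shoat with $k=2$ would require two adjacent matched free vertices and hence cannot exist as a simple graph); such boundary cases should be excluded or dispatched separately.
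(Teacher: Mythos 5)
Your proof is correct, and at its core it is the paper's own strategy: identify the left and right endpoints of the $k$ chords, sort them to the two ends of the backbone using adjacent transpositions realised as type~(i) moves of Lemma~\ref{l:tree-based-preserving-moves} (with the leaf-bearing vertices collecting in the middle), and then repair the matching with type~(ii) moves. The difference is in the bookkeeping of the sorting phase, and it is a genuine improvement in clarity: the paper runs three separate stages --- eliminating the ``overlap'' between left and right endpoints by strictly decreasing a potential $||\Theta||$, then shuttling the leaf-adjacent vertices into the middle, then rewiring --- whereas you do a single oriented selection-sort sweep (extreme unplaced vertex first, every vertex moving away from its own partner) and certify legality of each move by the invariant that a left endpoint always keeps its partner on its right. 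Your explicit isolation of the ``trapped'' straddling configuration as the only way a transposition could create a parallel edge, and the verification that the orientation excludes it, is exactly the point the paper leaves implicit. One further remark of yours is substantive rather than cautionary: for $n=2$, $k=2$ the unique tier-$2$ network on two leaves ($K_4$ minus an edge, with the leaves pendant at the nonadjacent pair) is lineal, yet no shoat network exists in $\TBN(2,2)$ as a simple graph, so the lemma as stated actually fails there; the paper's proof passes over this, and your proposal to exclude or separately dispatch such boundary cases is the correct repair.
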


\begin{proof}
Let $N$ be a lineal phylogenetic network in $\TBN(n,k)$ with $k\ge 1$.  

Choose a maximal length path $p$ in $N$. Because $N$ is lineal, this {induces} also a support tree $T$ for $N$. Label the leaves at each end of the path $x$ and $y$, and the vertices along $p$ by $x=v_1,v_2,\dots,v_m=y$.
Because $N$ is lineal, the edges connected to the other vertices along the path are either incident to leaves (if $n>2$), or other vertices along the path.

Let 
\begin{align*}
    L&=\{v_i\mid \{v_i,v_j\}\in E(N), i+1<j\}\\
    R&=\{v_j'\mid \{v_i',v_j'\}\in E(N), i+1<j\},
\end{align*}
so that $L$ (resp. $R$) is the set of vertices at the left (resp. right) end of edges between internal vertices that are not on the path $p$ (we assume without loss of generality that `left' refers to vertices closer to $x$ and `right' refers to vertices closer to $y$).
These sets are disjoint, since each vertex has degree 3, and non-empty, since $k\ge 1$.
There may also be vertices in $L$ to the right of some (but not all) vertices in $R$; let 
\[\Theta:=\{(v_i,v_j)\in L\times R\mid i>j\}\]
be the ``overlap set'': the pairs of vertices $(v_i,v_j)$ for which $v_i$ is an element of $L$ but is nevertheless to the right of $v_j\in R$, as in Figure~\ref{f:lineal.overlap}.
Define the \textit{extent} of the overlap to be the non-negative integer:
\[||\Theta||:=\sum_{(v_i,v_j)\in\Theta}(i-j).\]

\begin{figure}[ht]
    \centering
    \includegraphics[width=.8\textwidth]{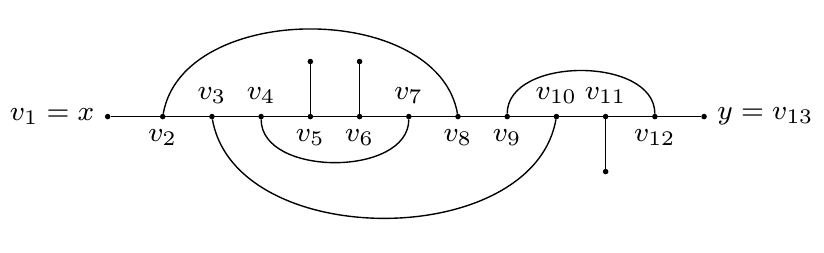}
    \caption{A lineal network with $L=\{v_2,v_3,v_4,v_9\}$, $R=\{v_7,v_8,v_{10},v_{12}\}$. This network has overlap set $\Theta=\{(v_9,v_7),(v_9,v_8)\}$ and $||\Theta||=2+1=3$.}
    \label{f:lineal.overlap}
\end{figure}

We will describe NNI moves that change the network to strictly decrease $||\Theta||$.

Suppose $\Theta\neq\emptyset$, and let $i$ be minimal such that  $(v_i,v_j)\in\Theta$.  
Because $i$ is minimal, the vertex $v_{i-1}$ to the left of $v_i$ in $p$ must either be (i) in $R$ or (ii) connected to a leaf.
We now consider these two cases.

(i) Suppose $v_{i-1}\in R$, and note that this means $i>4$ since there must be a vertex to the left of $v_{i-1}$ that is connected to it by an edge not in $p$, $v_1=x$ is a leaf, and $N$ has no parallel edges.
We perform the NNI move
\[ v_{i-2},v_{i-1},v_i,v_{i+1}\to v_{i-2},v_i,v_{i-1},v_{i+1},\]
shown in Figure~\ref{f:Theta.reduction.i}.
The result is still tree-based by Lemma~\ref{l:tree-based-preserving-moves}(i).
\begin{figure}[ht]
    \centering
    \includegraphics{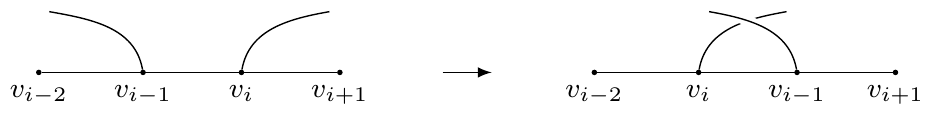}
    \caption{An NNI move reducing $||\Theta||$ by moving a vertex in $L$ to the left of a vertex in $R$. }
    \label{f:Theta.reduction.i}
\end{figure}

Now consider the effect on $\Theta$ of this NNI move.
Before the move, the elements of $\Theta$ that involve $v_{i-1}$ or $v_i$ are precisely the elements of the disjoint union:
\[\{(v_i,v_{i-1})\}\mathbin{\dot{\cup}}\{(v_j,v_{i-1})\mid \forall j>i, v_j\in L\}\mathbin{\dot{\cup}}\{(v_i,v_{k})\mid \forall k<i-1, v_k\in R\}.\]
After the NNI move, the elements of $\Theta$ that involve $v_{i-1}$ or $v_i$ are precisely
\[\{(v_j,v_{i})\mid \forall j>i, v_j\in L\}\mathbin{\dot{\cup}}\{(v_{i-1},v_{k})\mid \forall k<i-1, v_k\in R\}.\]
Note that elements that do not involve $v_{i-1}$ or $v_i$ are unaffected by the move, and that the cardinalities of the corresponding sets have not changed, but each corresponding term in $||\Theta||$ has decreased by 1.  Therefore the value of $||\Theta||$ has been decreased by the move by 
\[1+|\{v_j\in L\mid j>i\}| + |\{v_k\in R\mid k<i-1\}|.\]
In particular,
$||\Theta||$ has strictly decreased as a result of the move.

(ii) If $v_{i-1}$ is not in $R$ but connects to a leaf $z$ (so that in particular, $n\ge 3$), there is at least one vertex $v_j$ further left of $v_{i-1}$ that is in $R$. 
Perform the NNI move shown in Figure~\ref{f:Theta.reduction.ii}:
\[v_{i-2},v_{i-1},v_i,v_{i+1}\to v_{i-2},v_i,v_{i-1},v_{i+1}.\]
\begin{figure}[ht]
    \centering
    \includegraphics{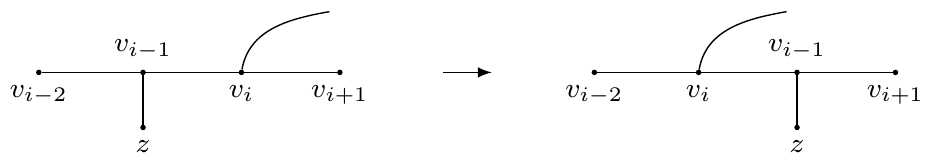}
    \caption{An NNI move reducing $||\Theta||$ by moving a vertex in $L$ to the left of a vertex attached to a leaf.} 
    \label{f:Theta.reduction.ii}
\end{figure}
As before, this still results in a tree-based network by Lemma~\ref{l:tree-based-preserving-moves}(i).  After the operation and relabelling vertices along the path, $i$ is no longer minimal such that $(v_i,v_j)\in\Theta$, because $(v_{i-1},v_j)$ is now in $\Theta$. 

The effect on $\Theta$ of this move is that an ordered pair $(v_i,v_j)\in L\times R$ with $j<i$ is replaced in $\Theta$ by the pair $(v_{i-1},v_j)$. As before, this has the effect of strictly reducing $||\Theta||$, as required.

Since such moves are always possible while $||\Theta||>0$, and they each strictly reduce $||\Theta||$, it follows that successive application of these moves can reduce $||\Theta||$ to zero.  At that point we must have $\Theta=\emptyset$.

Having reduced the size of the overlap set $||\Theta||$ to zero, we need to move any leaves (apart from $x$ and $y$) to the middle.

Suppose there are leaves attached to vertices $v_j$ in $p$, for $j\neq 1,m$, other than $x=v_1$ and $y=v_m$. We now want to bring those leaves to the middle of the path, between the vertices in $L$ and those in $R$.

Suppose a leaf {$z\neq x$} is attached to a vertex $v_{j-1}$ which has a vertex $v_{j}\in L$, to its right.
Perform the NNI move 
\[v_{j-2},v_{j-1},v_{j},v_{j+1}\ \to\ v_{j-2},v_{j},v_{j-1},v_{j+1},\]
which moves the attachment of the leaf $z$ one position towards the right along the path $p$ (the same move shown in Figure~\ref{f:Theta.reduction.ii}).  The new network is still tree-based, by Lemma~\ref{l:tree-based-preserving-moves}(i).  As long as there is a leaf attached to the left of a vertex from $L$, this can be repeated until there are none left.  Similarly, the reverse can be performed for leaves that are attached to a vertex that has a vertex in $R$ to their left, until there are no such vertices.

The result is that the leaves (apart from $x$ and $y$) are attached to vertices in $p$ with all vertices in $L$ to their left, and all vertices in $R$ to their right.

It is now elementary to make the network shoat.  Take the leftmost vertex in $L$, which must be $v_2$.  It is connected by an edge $\{v_2,v_k\}$ to a vertex $v_k\in R$.  If $k\neq m-1$ (so that $v_k$ is not the rightmost vertex on the path before the leaf $y$), the vertex $v_{k+1}$ is also in $R$ and is connected to an edge $\{v_\ell,v_{k+1}\}$ for some $v_\ell\in L$.  Perform the NNI
\[v_2,v_k,v_{k+1},v_\ell \ \to\ v_2,v_{k+1},v_k,v_\ell,
\]
so that $v_2$ is now connected by an edge to a vertex one further to the right. 
This is again tree-based by Lemma~\ref{l:tree-based-preserving-moves}(ii). This can be continued until there is an edge $\{v_2,v_{m-1}\}$, at which point the network is shoat.
\end{proof}

\begin{lem}\label{l:shoats.connected}
Shoat networks are a connected subspace of $\TBN(n,k)$.  That is,
we can use NNI moves between tree-based networks to move from one shoat network $S_1$ on leaf set $X$ with tier $k$ to another such shoat network $S_2$.
\end{lem}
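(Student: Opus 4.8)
The plan is to fix a single canonical shoat $S^{*}$ on the leaf set $X$ in tier $k$ and to show that every shoat can be brought to $S^{*}$ by a sequence of NNI moves running through tree-based networks. Since each move we use is itself an NNI between tree-based networks, it can be reversed (as stressed in the remark preceding Lemma~\ref{l:simple.to.lineal}), so a path $S_1\to S^{*}\to S_2$ connects any two shoats. A shoat is determined by exactly three pieces of data: the unordered pair of end leaves $\{x,y\}$, the left-to-right order of the middle leaves along $V_M$, and the matching between $V_L$ and $V_R$. I would normalise these three in turn.

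Two of the three normalisations are immediate from Lemma~\ref{l:tree-based-preserving-moves}. To sort the matching, observe that if $\ell_i,\ell_{i+1}$ are path-adjacent vertices of $V_L$ with chords to $r_s,r_t\in V_R$, then the NNI on the path $(r_s,\ell_i,\ell_{i+1},r_t)$ is an instance of Lemma~\ref{l:tree-based-preserving-moves}(ii) — the middle edge $\{\ell_i,\ell_{i+1}\}$ lies in the support tree while the two chords do not — and it exchanges the two chord targets. As adjacent transpositions generate the symmetric group $S_{k-1}$, any matching can be reduced to the identity. Likewise, swapping two path-adjacent middle leaves via the NNI on $(z,m_j,m_{j+1},z')$ is an instance of Lemma~\ref{l:tree-based-preserving-moves}(i), since all three of its edges lie in the support tree, so the middle leaves can be placed in any order. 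More generally, the same case-(i) move on a segment $(u,v,w,t)$ transposes \emph{any} two path-adjacent interior vertices $v,w$ together with whatever leaf or chord each carries, provided neither of the new edges $\{u,w\}$ nor $\{v,t\}$ already exists; this lets me reorder the interior vertices along the path, subject to never bringing the two endpoints of a chord to path-distance one.

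The remaining and delicate step is changing the pair of end leaves, and here I would allow the intermediate networks to leave the shoat class, keeping only tree-basedness. When $k=1$ the vertex $a$ adjacent to the end leaf $x$ is itself adjacent to a leaf-bearing vertex $m_1$, and the single case-(i) NNI $(x,a,m_1,z_1)\to(x,m_1,a,z_1)$ turns $z_1$ into an end leaf while sending $x$ into the middle. For general $k$ the $V_L$ block of chord-bearing vertices separates $a$ from the first leaf-bearing vertex, so I would first slide a middle vertex up to $a$ using the reordering moves above, then apply the $k=1$-style swap, and finally restore the canonical block order $a,V_L,V_M,V_R,b$ and the identity matching. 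Applying this end-leaf exchange at each end as needed, and then re-running the matching and middle-order normalisations, brings an arbitrary shoat to $S^{*}$.

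The main obstacle is the simple-graph constraint. Because we forbid parallel edges — precisely the point on which our setting differs from~\cite{janssen2019rearrangement} — the reordering moves are not unconditionally available: sliding the $V_L$ block past part of $V_M$ threatens to place some $\ell_i$ path-adjacent to its chord partner, which would create a forbidden parallel edge (one already sees this for $k=2,n=3$). The heart of the proof is therefore to exhibit a collision-free schedule of NNIs — for instance, first using matching moves to spread the chords so that each $\ell_i$ is far from its partner, thereby opening room to bring a leaf-bearing vertex adjacent to $a$ — and to verify that every intermediate graph remains a simple tree-based network and that the final graph is again a shoat. Showing that such a schedule always exists, for all $n$ and $k$, is where the real work lies.
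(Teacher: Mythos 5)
Your proposal matches the paper's proof in all essentials: both normalise the end-leaf pair, the order of the middle leaves, and the $V_L$--$V_R$ matching by NNI moves justified case-by-case through Lemma~\ref{l:tree-based-preserving-moves}, and your detour through a canonical shoat $S^{*}$ is equivalent, by reversibility of NNI, to the paper's direct transformation of $S_1$ into $S_2$. The one point you leave open --- a collision-free schedule for the end-leaf exchange --- is in fact already settled by your own construction, since the only vertex you ever shuttle through the $V_L$ block is leaf-bearing and hence chord-free, so none of those NNIs can demand an already-existing (in particular a parallel) edge; the paper asserts legality of its corresponding moves at essentially the same level of detail.
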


\begin{proof}
Note that both $S_1$ and $S_2$ have two distinguished leaves (corresponding to $x$ and $y$ in Figure \ref{f:shoat}), and let $x_1$ and $y_1$ be the distinguished leaves in $S_1$, with $x_2$ and $y_2$ the two distinguished leaves in $S_2$. 
We will show how to transform $S_1$ into $S_2$.

If the pairs $\{x_1,y_1\}$ and $\{x_2,y_2\}$ coincide, we do not have to bring the distinguished leaves into their positions: they already are. However, if, say, $x_1$ is not a distinguished leaf in $S_2$, we have to use NNI moves to bring it to its correct position according to $S_2$ (left of the ``left'' vertices), and we do this along the maximal length path from $x_1$ to $y_1$ in $S_1$. Note that all these NNI moves are necessarily legal because $x_1$ is moved left past all the ``left'' vertices in $S_1$, meaning at no point the main rule for NNI (that the first and third or second and fourth vertices are adjacent) is violated.  They also  preserve tree-basedness because they are operations of the first type in Lemma \ref{l:tree-based-preserving-moves}. 

Once we are done with bringing the first distinguished leaf of $S_1$ which is not a distinguished leaf of $S_2$ into position, we bring the corresponding distinguished leaf of $S_2$, that is misplaced in $S_1$, to the former position of $x_1$ by NNI moves. Again, these NNI moves are necessarily legal and preserve tree-basedness by similar arguments to the above.

We repeat the same procedure with $y_1$ if applicable.

When both distinguished leaves are in the correct position (as induced by $S_2$), we use NNI moves to arrange the other $n-2$ leaves correctly.  Again note that all these NNI moves are necessarily legal and preserve tree-basedness by Lemma \ref{l:tree-based-preserving-moves}.

Last, we re-arrange the at most $k-1$ ``left'' vertices between $x_2$ and the non-distinguished leaves of $S_2$ as well as the at most $k-1$ ``right'' vertices between $y_2$ and the non-distinguished leaves of $S_2$ by NNI moves (it is sufficient to rearrange just either the left or the right sides). Again note that all these NNI moves are necessarily legal and preserve tree-basedness by Lemma \ref{l:tree-based-preserving-moves}.
\end{proof}

\section{Connectedness of $\TBN(n,k)$}\label{s:TBN.connectedness}

We are now in a position to prove the main result of this manuscript, namely Theorem~\ref{t:TBN.NNI.connected}.
The proof of this is informed by the idea of the proof of \cite[Theorem 3]{francis2018bounds}, which describes a particular path through the space of networks. This path is essentially laid out by the lemmas in Section~\ref{s:NNI.among.TBNs}. 

\begin{thm}\label{t:TBN.NNI.connected}
The space of tree-based unrooted phylogenetic networks on leaf set $X$ of tier $k$ is connected under NNI moves.
\end{thm}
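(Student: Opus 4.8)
The plan is to reduce the theorem directly to the three transformation lemmas of Section~\ref{s:NNI.among.TBNs}, by showing that every network in $\TBN(n,k)$ can be connected, using only NNI moves that never leave $\TBN(n,k)$, to a common ``hub'' family of the space --- the shoat networks. The key structural fact I will exploit is that the NNI operation is its own inverse: applying an NNI move to the path $(a,c,b,d)$ recovers the original path $(a,b,c,d)$. Consequently any walk we build by moving \emph{toward} the hub can be traversed in reverse to move \emph{away} from it, which is precisely what a symmetric connectivity argument needs.

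Concretely, let $N_1$ and $N_2$ be two arbitrary networks in $\TBN(n,k)$. First I would apply Lemma~\ref{l:simple.to.lineal} to each endpoint, obtaining sequences of tree-basedness-preserving NNI moves that carry $N_1$ to a lineal network $L_1$ and $N_2$ to a lineal network $L_2$, every intermediate network remaining in $\TBN(n,k)$. Next I would invoke Lemma~\ref{l:lineal.to.pseudo.H} to convert each lineal network into a shoat network, still inside $\TBN(n,k)$, producing shoat networks $S_1$ and $S_2$ (both of tier $k$ on the leaf set $X$). Finally, Lemma~\ref{l:shoats.connected} furnishes a path from $S_1$ to $S_2$ through shoat networks, again realised by NNI moves within $\TBN(n,k)$.

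Chaining these pieces together yields the walk
\[
N_1\ \longrightarrow\ L_1\ \longrightarrow\ S_1\ \longrightarrow\ S_2\ \longrightarrow\ L_2\ \longrightarrow\ N_2,
\]
in which every step is a single NNI move and every intermediate network lies in $\TBN(n,k)$. Because each NNI move is reversible (as noted in the remark preceding Lemma~\ref{l:simple.to.lineal}), the final leg taking $N_2$ to $L_2$ to $S_2$ may be run backwards, so the displayed walk is a genuine NNI path from $N_1$ to $N_2$ inside the subspace $\TBN(n,k)$. Since $N_1$ and $N_2$ were arbitrary, $\TBN(n,k)$ is connected under NNI.

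I do not anticipate a substantive obstacle at the level of the theorem itself, since the real difficulty has been packaged into the lemmas: controlling the overlap extent $\lVert\Theta\rVert$, checking that each intermediate NNI move is legal, and verifying that tree-basedness survives every step are all handled in Lemmas~\ref{l:simple.to.lineal}--\ref{l:shoats.connected}. The one point that genuinely requires care is the explicit use of reversibility --- one must be sure that reducing \emph{both} endpoints toward the shoat hub produces a two-way path rather than merely a one-directional reduction procedure --- but this follows immediately from the symmetry of the NNI operation together with the fact that each reversed move stays tree-based for the same reason its forward version did.
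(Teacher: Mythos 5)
Your proposal is essentially the paper's own proof: reduce both networks to lineal form (Lemma~\ref{l:simple.to.lineal}), then to shoat networks (Lemma~\ref{l:lineal.to.pseudo.H}), connect the two shoat networks (Lemma~\ref{l:shoats.connected}), and reverse the second half of the walk using the reversibility of NNI moves. The only piece you omit is the base case $k=0$, where the networks are trees and shoat networks are not defined (Lemmas~\ref{l:lineal.to.pseudo.H} and~\ref{l:shoats.connected} require $k\ge 1$); the paper disposes of this case by citing the classical connectivity of tree space under NNI due to Robinson.
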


\begin{proof} 
The case $k=0$, in which networks are trees, has been established in~\cite{robinson1971comparison}, so we may assume $k\ge 1$ (in fact, since all such networks are tree-based for $k\le 4$~\cite{francis2018tree}, the result is also immediate for $k\le 4$ because the space of networks is connected~\cite{huber2016transforming}).

Lemma~\ref{l:simple.to.lineal} shows that a tree-based network can be transformed into a lineal network while remaining inside $\TBN(n,k)$.  
Lemma~\ref{l:lineal.to.pseudo.H} shows that a lineal tree-based network can be transformed into a shoat network, and Lemma~\ref{l:shoats.connected} shows that any two shoat networks are connected by NNI moves within $\TBN(n,k)$ (both these results are restricted to $k\ge 1$ because shoat networks are only defined for $k\ge 1$). Noting that all NNI moves can always be reversed, this proves the theorem (because you can go from any tree-based network $N$ to any other tree-based network $N'$ by first modifying $N$ into a shoat network, possibly modifying this shoat network to give another shoat network, and then continuing from this second shoat network to $N'$ -- all the time only using NNI moves through tree-based networks).
\end{proof}

In the following we use the fact that the number of vertices in an unrooted phylogenetic network of tier $k$ is exactly\footnote{\label{vertices}{Note that a binary phylogenetic tree with $n$ leaves has precisely $2n-2$ vertices \cite{Semple2003}. Any of the extra $k$ edges added to such a tree may induce two new vertices. 
In total, this is $2n-2+2k = 2(n+k-1)$ vertices.}}  
$2(n+k-1)$.

\begin{cor}
The diameter of the space of tree-based networks {with $n$ leaves and tier $k$} under NNI is at most 
\[f(n,k)=\binom{k-1}{2}+\binom{n-2}{2}+4nk+4k^2+4n-4k-17.\] 
That is, the diameter is in $\mathcal O((n+k)^2)$, and for fixed tier $k$, it is in $\mathcal O(n^2)$.
\end{cor}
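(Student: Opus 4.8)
The plan is to bound the diameter by the length of the explicit trajectory constructed in the proof of Theorem~\ref{t:TBN.NNI.connected}: to pass from a tree-based network $N$ to another tree-based network $N'$ we route
\[
N \;\longrightarrow\; \text{lineal} \;\longrightarrow\; \text{shoat } S_1 \;\longrightarrow\; \text{shoat } S_2 \;\longrightarrow\; \text{lineal} \;\longrightarrow\; N',
\]
invoking Lemmas~\ref{l:simple.to.lineal}, \ref{l:lineal.to.pseudo.H} and \ref{l:shoats.connected} in turn, with the last two legs being reversals of the first two. Since every NNI move is invertible, the number of moves on this trajectory is $2(M_1 + M_2) + M_3$, where $M_i$ is a worst-case bound on the number of moves used by the $i$-th lemma, so it suffices to bound each $M_i$ and add. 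Throughout I would use that a tier-$k$ network on $n$ leaves has $2(n+k-1)$ vertices, hence $|\mathring V| = n+2k-2$ interior vertices, and that in a lineal network the $k$ non-tree edges form a matching along the path, contributing exactly $k$ left-endpoints and $k$ right-endpoints.

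First I would bound $M_1$ (Lemma~\ref{l:simple.to.lineal}): each move there incorporates one further interior vertex into the maximal path and shortens an off-path branch by one, and since the lineal support tree carries all interior vertices on its main path, at most $n+2k-2$ such moves are needed, so $M_1 = \mathcal O(n+k)$. Next I would bound $M_2$ (Lemma~\ref{l:lineal.to.pseudo.H}). The crucial point is that one must count the \emph{moves} directly rather than track the potential $\|\Theta\|$, since a single move may drop $\|\Theta\|$ by as much as $2k+1$. Reducing the overlap amounts to sorting the $k$ left-endpoints to the left of the $k$ right-endpoints by adjacent transpositions along the path, costing at most the number of inverted $L$--$R$ pairs, i.e.\ $\le k^2$ moves; bringing the $n-2$ non-distinguished leaves into the middle costs at most one move per (leaf, flanking endpoint) incidence, i.e.\ $\le 2k(n-2)$ moves; and the final step making the network shoat is linear in $k$. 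Hence $M_2 = \mathcal O(k^2 + nk)$.

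Then I would bound $M_3$ (Lemma~\ref{l:shoats.connected}), which contributes the genuinely quadratic-in-$n$ and quadratic-in-$k$ terms. Repositioning the at most four distinguished leaves is linear in $n+k$; permuting the $n-2$ middle leaves into their $S_2$-order is a sort by adjacent transpositions, costing at most the number of inversions $\binom{n-2}{2}$; and permuting the $k-1$ vertices on one side (it suffices to fix one side) into their $S_2$-order costs at most $\binom{k-1}{2}$. Thus $M_3 \le \binom{n-2}{2} + \binom{k-1}{2} + \mathcal O(n+k)$. Summing $2(M_1+M_2)+M_3$ then collects the two binomial terms from $M_3$ together with the $\mathcal O(nk+k^2)$ contribution from $M_2$ and the linear contributions from $M_1$ and the distinguished-leaf moves; careful worst-case bookkeeping of these constants is what yields the stated closed form $f(n,k)$, and in particular the whole expression is $\mathcal O((n+k)^2)$, and $\mathcal O(n^2)$ for fixed $k$.

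The main obstacle I anticipate is the exact accounting in $M_2$ and $M_3$: in both lemmas the moves are adjacent transpositions, so the honest count is a number of inversions, and one must verify that the worst-case interleavings of $L$-vertices, $R$-vertices and leaves along the path simultaneously maximise these inversion counts. One must also avoid double-counting moves shared between the overlap-reduction and leaf-centering phases of Lemma~\ref{l:lineal.to.pseudo.H}, and check that the distinguished-leaf repositioning in Lemma~\ref{l:shoats.connected} never exceeds its claimed linear budget. None of this is conceptually hard, but matching the precise constants $4nk+4k^2+4n-4k-17$ rather than merely the order of growth requires pinning down each worst case exactly.
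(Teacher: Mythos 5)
Your decomposition is exactly the paper's: route $N\to\text{lineal}\to\text{shoat}\to\text{shoat}\to\text{lineal}\to N'$ and bound the total by $2(M_1+M_2)+M_3$. The asymptotic claims $\mathcal O((n+k)^2)$ and $\mathcal O(n^2)$ for fixed $k$ do follow from what you write. However, the corollary asserts a specific closed form $f(n,k)$, and you explicitly defer its derivation to ``careful worst-case bookkeeping'' without performing it. That is the gap: the exact constants are the content of the statement, and the intermediate bounds you do state would not reproduce them. The paper's counts are $M_1 \le \bigl(2(n+k-1)-1\bigr)-n-2 = n+2k-5$ (support-tree edges minus leaf edges minus a trivial lower bound of $2$ on the initial maximal path), $M_2 \le k(n-2+2k-2)+k(n-2)=k(2n+2k-6)$, and $M_3 \le \binom{k-1}{2}+\binom{n-2}{2}+2n+4k-7$; plugging these into $2(M_1+M_2)+M_3$ gives $f(n,k)$ exactly. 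Your $M_1\le n+2k-2$ is looser, and your $M_2$ accounting ($k^2$ for the overlap phase plus $2k(n-2)$ for leaf-centering) yields a $2k^2$ coefficient after doubling rather than the $4k^2$ in $f(n,k)$, so even granting your bounds you would land on a different formula.

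There is also a substantive flaw in your $M_2$ accounting, not just a bookkeeping one. You bound the overlap-reduction phase by the number of inverted $L$--$R$ pairs, $\le k^2$. But in Lemma~\ref{l:lineal.to.pseudo.H} the reduction of $\|\Theta\|$ proceeds by two kinds of moves: case (i), where a left-endpoint passes a right-endpoint, and case (ii), where a left-endpoint passes a \emph{leaf-attached} vertex. Case (ii) moves do not resolve any $L$--$R$ inversion, so they are not counted by your $k^2$, yet up to $k(n-2)$ of them can occur during this phase. You flag the double-counting risk between this phase and the leaf-centering phase but do not resolve it; the clean way out is either the paper's direct per-vertex count (each of the $k$ left-endpoints passes at most $n-2$ leaf-attached and $2k-2$ endpoint vertices), or a global argument that each unordered pair of path vertices swaps relative order at most once. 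As it stands, your proof establishes the order of growth but not the bound $f(n,k)$ claimed in the corollary.
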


\begin{proof}
Transforming one tree-based network to another takes the steps described in Theorem~\ref{t:TBN.NNI.connected}: 
\begin{enumerate}
    \item make the networks lineal; 
    \item make the lineal networks shoat graphs; and 
    \item transform between shoat graphs.
\end{enumerate}
Note that as the NNI moves are reversible, the total pathlength between two networks will be at most twice the sum of the maxima of the first two steps, plus the maximum of the third.

The algorithm for making a network lineal, given in Lemma~\ref{l:simple.to.lineal}, takes a support tree with a maximal path, and incorporates other edges of the tree into that path while remaining tree-based.  Each edge is incorporated in a single NNI move, so the number of moves required to make the network lineal is at most the number of edges in a spanning tree minus the number of leaves (because leaf edges are not incorporated into the path), minus the length of the initial maximal path.

We need to bound two numbers: the number of edges in a support tree, and the length of a maximal path.  The number of edges in a support tree is the number of vertices minus one, and for a binary network the number of vertices\footnote{ibid.}
is $2(n+k-1)$. 
Therefore, the number of edges in the support tree is at most $2(n+k-1)-1$.  The number of edges in a maximal path in a support tree would be 2 if the network was a star tree, so we can use this as a very weak lower bound on this.

That is, there are at most $\left(2(n+k-1)-1\right)-n-2=n+2k-5$ moves required.
\[\text{network $\to$ lineal: }n+2k-5.\]

To transform a lineal network into a shoat graph using the procedure described in Lemma~\ref{l:lineal.to.pseudo.H}, we need to shift the ``left'' vertices to the left, and the vertices attached to leaves, to the middle.  Beginning with the ``left'', each must pass by at most $n-2$ vertices attached to leaves, and $2k-2$ other vertices attached to edges (since there must be at least one ``right'' vertex to its right, and one other, since it cannot be adjacent to its partner.  This gives a total of at most $k(n-2+2k-2)=k(n+2k-4)$ moves.  
To move the $n-2$ leaf-adjacent vertices to the middle, they each have to pass at most $k$ left or $k$ right vertices, for a total of $k(n-2)$ moves.
Thus there is a total of $k(2n+2k-6)$ moves to make the network shoat. 
\[\text{lineal $\to$ shoat: }k(2n+2k-6).\]

Finally, a shoat network $N_1$ with distinguished leaves $x_1$ and $y_1$ can be arranged into another (along the lines of Theorem~\ref{t:TBN.NNI.connected}), in $\binom{k-1}{2}+\binom{n-2}{2}+2n+4k-7$ NNI moves: 
\begin{itemize}
    \item If one or both distinguished leaves are distinguished in both shoat networks, then the symmetry of the graph allows us to flip $N_1$ in the vertical axis so that these are aligned.  Thus either one or both distinguished leaves will need to be moved into position from the set of leaves in the middle (between the vertices in $L$ and $R$). This will take $(n-1)+(k-1)$ for the first (moving past possibly $n-1$ leaves and $k-1$ vertices in say $L$), and $(n-2)+(k-1)$ for the second, since there is one fewer leaf available to move past.  The leaves that were previously distinguished then move back to the middle in $k-1$ moves each, for a total of $(n-1)+(k-1)+(n-2)+(k-1)+2(k-1)=2n+4k-7$ moves;
    \item at most $\binom{n-2}{2}$ moves to arrange the leaves in the leaf zone correctly (the diameter of the symmetric group on $n-2$ objects, under the operation of adjacent transpositions); and
    \item at most $\binom{k-1}{2}$ moves to arrange the $k-1$ endpoints on the right or left to agree with $N_2$.
\end{itemize} 
\[\text{shoat $\to$ shoat: }\binom{k-1}{2}+\binom{n-2}{2}+2n+4k-7.\]

Therefore the total maximal distance between any two networks is at most
\[\begin{split}\binom{k-1}{2}+\binom{n-2}{2}+(2n+4k-7)+2\left( (n+2k-5) + k(2n+2k-6)\right)\\ 
=
\binom{k-1}{2}+\binom{n-2}{2}+4nk+4k^2+4n-4k-17,
\end{split}
\]
as required. This completes the proof.
\end{proof}

We are now able to show the connectivity of the entire space of tree-based networks on a fixed set of leaves $X$,  using the triangle operations defined in Section~\ref{s:definitions}. 
First, it is straightforward to show that either triangle operation on a tree-based network is still tree-based, as follows: 

\begin{lem}\label{l:triangle.moves}
If $N$ is a tree-based network in $\TBN(n,k)$, then
\begin{enumerate}
    \item for $v\in V(N)\setminus X$, $\Delta^+(N,v)\in\TBN(n,k+1)$; and
    \item for $\{v_1,v_2,v_3\}$ a 3-cycle in $N$, $\Delta^-(N,\{v_1,v_2,v_3\})\in\TBN(n,k-1)$.
\end{enumerate}
\end{lem}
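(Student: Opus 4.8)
The plan is to treat the tier bookkeeping and the preservation of tree-basedness separately. That the tiers come out as claimed is immediate from the definitions of $\Delta^+$ and $\Delta^-$: inserting a triangle adds exactly one independent cycle and collapsing one removes exactly one, so the minimal number of edges whose deletion yields a tree goes up or down by $1$; the leaf set $X$ is untouched and the three new vertices are all of degree $3$, so the result is again a binary network on $n$ leaves. The substance of the lemma is therefore the tree-basedness, i.e.\ producing, from a support tree $T$ of $N$, a support tree $\tilde T$ of the modified network. Throughout I will use that a support tree is a spanning tree whose degree-$1$ vertices are exactly the leaves $X$, so the two things I must guarantee about $\tilde T$ are that it is a spanning tree and that no interior vertex of the modified network is a degree-$1$ vertex of $\tilde T$.

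For part (1), first I would note that $v$ is interior, hence not a leaf of $T$, so $\deg_T(v)\in\{2,3\}$. I would build $\tilde T$ by keeping every edge of $T$ not incident to $v$, attaching $v_i$ to $w_i$ exactly when $\{v,w_i\}\in T$, and then reconnecting $v_1,v_2,v_3$ using triangle edges. The delicate point, and the one place where a naive choice fails, is that each of $v_1,v_2,v_3$ is interior in $\Delta^+(N,v)$ and so must end up with $\deg_{\tilde T}\ge 2$. If $\deg_T(v)=3$ I attach all three edges $\{v_i,w_i\}$ and bridge the resulting three components with two triangle edges; if $\deg_T(v)=2$, say $\{v,w_1\},\{v,w_2\}\in T$, I attach $\{v_1,w_1\},\{v_2,w_2\}$ and then give the ``orphan'' $v_3$ degree $2$ by adding the two triangle edges $\{v_1,v_3\},\{v_2,v_3\}$. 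In either case a count of the components being merged (equivalently, of edges against vertices) shows $\tilde T$ is a spanning tree, and by construction every $v_i$ has degree $2$ or $3$, so $V^1(\tilde T)=V^1(T)=X$.

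For part (2), the key preliminary observation, which I expect to be the crux, is a rigidity in how $T$ meets the triangle. Each $v_i$ has degree exactly $3$ in $N$ (two triangle edges and one external edge $\{v_i,w_i\}$) and is interior, hence $\deg_T(v_i)\ge 2$. Since $T$ is acyclic, at most two triangle edges lie in $T$; and if fewer than two did, some $v_i$ would retain at most its single external edge and so have $\deg_T(v_i)\le 1$, a contradiction. Hence \emph{exactly} two triangle edges lie in $T$, so the triangle meets $T$ in a path, say $v_1,v_2,v_3$ with $v_2$ in the middle, and the external edges at the two ends, $\{v_1,w_1\}$ and $\{v_3,w_3\}$, are forced into $T$ (otherwise those endpoints would be leaves of $T$). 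With this structure in hand, collapsing the triangle is exactly contracting the subtree $v_1,v_2,v_3$ of $T$ to the single vertex $v$; contracting a subtree of a tree yields a tree, which spans the collapsed network, and $v$ inherits the (at least two) external edges of the path, so $\deg_{\tilde T}(v)\ge 2$. No other vertex changes degree, so again $V^1(\tilde T)=X$ and $\tilde T$ is a support tree.

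In summary, the main obstacle in both directions is the leaf constraint $V^1(\tilde T)=X$: a spanning tree is easy to produce, but one must avoid turning a newly created (blow-up) or surviving (collapse) interior vertex into a pendant vertex of $\tilde T$. In the blow-up this is resolved by the explicit choice of which triangle edges to include; in the collapse it is resolved by first proving that $T$ necessarily meets the triangle in a path whose two end-external-edges are present, after which the contraction does the rest. The remaining spanning/acyclicity checks are routine, and I would dispatch them with an edge-versus-vertex count.
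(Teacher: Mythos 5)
Your proposal is correct and follows essentially the same route as the paper: both arguments hinge on the observation that interior vertices must have degree at least $2$ in any support tree, which forces at least two of the relevant incident edges into $T$, and then both perform the same explicit local rewiring (routing a path through the new triangle for $\Delta^+$, contracting the path the triangle induces in $T$ for $\Delta^-$). You are somewhat more explicit than the paper in separating the cases $\deg_T(v)=2$ and $\deg_T(v)=3$ and in pinning down exactly how $T$ meets the $3$-cycle, but the underlying argument is the same.
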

\begin{proof}

If $N$ is tree-based and $v\in V(N)$ has degree 3, then any support tree for $N$ includes $v$ and at least two of the three edges incident to it, $\{v,w_1\}$, $\{v,w_2\}$, $\{v,w_3\}$, because the support tree is a spanning tree of $N$ whose only leaves are elements of $X$, and hence not $v$.  

{We first show (1)}. Without loss of generality, suppose there is a support tree $T$ for $N$ containing $\{v,w_1\}$ and $\{v,w_2\}$. Then replacing the edges $\{v,w_1\}$ and $\{v,w_2\}$ in $T$ by the edges $\{w_1,v_1\}, \{v_1,v_3\}, \{v_3,v_2\}, \{v_2,w_2\}$ creates a tree (no cycles are generated) that is a support tree for $\Delta^+(N,v)$ (it is connected and has no additional leaves), which is also now in tier $k+1$, {so in summary, we have $\Delta^+(N,v)\in\TBN(n,k+1)$}.

Likewise, {for (2),} if $N$ is tree-based and has a 3-cycle $\{v_1,v_2,v_3\}$, with additional edges $\{v_1,w_1\}, \{v_2,w_2\}, \{v_3,w_3\}$, then it has a support tree that includes all of $\{v_1,v_2,v_3\}$, and at least two of the edges $\{v_1,w_1\}, \{v_2,w_2\}, \{v_3,w_3\}$.  Replacing these by a single vertex in both the network and the support tree does not disconnect the tree, create cycles, or make any vertex uncovered by the collapsed tree. 
\end{proof}

\begin{thm}\label{t:all.TBN.connected}
The space of tree-based networks on $X$ is connected under NNI moves and triangle moves.
\end{thm}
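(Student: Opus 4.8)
The plan is to reduce the whole statement to a single ``reference ladder'' of tree-based networks that climbs through every tier, and then to connect an arbitrary tree-based network to the appropriate rung of this ladder using the within-tier connectivity already in hand. The key observation that makes the argument clean is that climbing \emph{up} a tier via $\Delta^+$ requires only an internal degree-$3$ vertex, whereas climbing down via $\Delta^-$ requires an actual $3$-cycle; by building the ladder upward we never have to manufacture a triangle in a network that lacks one.

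First I would fix a reference network in each tier. Start with a fixed tree $R_0$ on $X$ (tier $0$), for instance a caterpillar, and for each $k\ge 0$ set $R_{k+1}:=\Delta^+(R_k,v_k)$, where $v_k$ is any internal degree-$3$ vertex of $R_k$ (such a vertex exists whenever $n\ge 3$, and $\Delta^+$ produces further such vertices). By Lemma~\ref{l:triangle.moves}(1) and induction, every $R_k$ lies in $\TBN(n,k)$, and by construction $R_k$ and $R_{k+1}$ differ by a single triangle move that stays inside the class of tree-based networks. Thus $R_0,R_1,R_2,\dots$ is a chain of tree-based networks whose consecutive members are joined by one tree-based-preserving triangle move. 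Moreover, since $R_{k+1}$ was produced by $\Delta^+$, it contains a genuine $3$-cycle, so the downward move $\Delta^-$ from $R_{k+1}$ to $R_k$ is legal and again tree-based-preserving by Lemma~\ref{l:triangle.moves}(2); the ladder is therefore traversable in both directions through tree-based networks.

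Next, given an arbitrary tree-based network $N$ of tier $k$, Theorem~\ref{t:TBN.NNI.connected} guarantees that $N$ can be transformed into the reference rung $R_k$ using only NNI moves, each of which produces a tree-based network of tier $k$. Hence every tree-based network is connected, through tree-based networks, to its reference rung $R_k$. To connect two arbitrary tree-based networks $N,N'$ of tiers $k\le k'$, I would then concatenate three pieces: transform $N$ into $R_k$ by NNI (Theorem~\ref{t:TBN.NNI.connected}); climb the ladder $R_k,R_{k+1},\dots,R_{k'}$ by triangle moves $\Delta^+$; and finally transform $R_{k'}$ into $N'$ by NNI. Every step along this path is either an NNI move between tier-$k$ (resp.\ tier-$k'$) tree-based networks or one of the ladder's triangle moves, so the entire path stays within the tree-based networks, proving connectedness.

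The main obstacle to watch for is precisely the applicability of the triangle operations, and the plan neutralises it by construction: because the ladder is assembled only with $\Delta^+$, the sole places a $\Delta^-$ would ever be invoked are on triangles that were created by our own earlier $\Delta^+$ steps, so the $3$-cycle it requires is always present. The only genuine care needed is in small or degenerate cases --- for instance $n=2$, where internal degree-$3$ vertices, and hence $\Delta^+$, may be unavailable, and where the simple-graph restriction limits which higher-tier networks exist at all. These should be dispatched separately or excluded under the standing assumptions $|X|\ge 2$ and properness; the substantive content of the theorem is carried entirely by Theorem~\ref{t:TBN.NNI.connected} and Lemma~\ref{l:triangle.moves}, which this argument merely assembles.
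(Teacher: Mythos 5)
Your proof is correct (for $|X|\ge 3$) and takes a genuinely different route from the paper's. The paper reduces everything to the problem of \emph{descending} a tier from an arbitrary tree-based network: since $\Delta^-$ requires an actual $3$-cycle, the entire technical content of its proof is a procedure that, given $N\in\TBN(n,k)$ with no triangle, takes a cycle consisting of support-tree edges plus one extra edge and repeatedly shortens it by NNI moves (justified by parts (ii) and (iii) of Lemma~\ref{l:tree-based-preserving-moves}) until a triangle appears and can be collapsed. Your reference ladder sidesteps this entirely: because the ladder is assembled only with $\Delta^+$, the only $3$-cycles ever collapsed are ones the ladder itself created, so no triangle ever has to be manufactured inside an arbitrary network. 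Combined with Theorem~\ref{t:TBN.NNI.connected} to reach the rung $R_k$ within each tier, this yields a shorter and cleaner connectivity argument; what the paper's approach buys in exchange is a ``local'' descent that does not detour through a fixed reference network (hence is more useful for distance bounds), and a triangle-creation technique of independent interest. Two caveats. First, you should verify (it is easy) that each $R_k$ is a proper simple phylogenetic network, so that Theorem~\ref{t:TBN.NNI.connected} applies to it. Second, your instinct about $n=2$ is right and is in fact a problem for the theorem as stated, not merely for your proof: one checks that $N(2,1)=\emptyset$ (two internal degree-$3$ vertices on two leaves force a parallel edge), while $\TBN(2,2)\ne\emptyset$; since NNI preserves the tier and triangle moves change it by exactly one, the tier-$0$ tree on two leaves cannot be joined to any tier-$2$ network, so connectivity genuinely requires $|X|\ge 3$. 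The paper's own proof does not escape this either, so it does not count against you, but it should be stated as a hypothesis rather than ``dispatched separately.''
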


\begin{proof}
This is immediate from Theorem~\ref{t:TBN.NNI.connected} and Lemma~\ref{l:triangle.moves} as long as we can show that it is always possible to move to a tree-based network with a 3-cycle in it, in order to move down a tier.

If the network does not have a triangle already, we can produce one using a series of NNI moves that stays within $\TBN(n,k)$ as follows. 

Let $T$ be a support tree for $N$, and find a cycle in $N$ that is made up of edges in $T$ except for one. 
Such a cycle always exists, just by considering an edge $\{v_1,v_2\}$ in $N\setminus T$ and a path within $T$ between its endpoints.  Denote the vertices of this path $(v_3,\dots,v_i)$, with $i>3$, so that the cycle in $N$ is $(v_1,v_2,\dots,v_i,v_1)$.  Since all vertices in the cycle have degree three in $N$, there is an edge $\{v_3,v_3'\}$ that is not in the cycle.  

Perform an NNI on the path $(v_1,v_2,v_3,v_3')$, which is legal because the edges $\{v_1,v_3\}$ and $\{v_2,v_3'\}$ are not in $N$ (if they were $N$ would have a triangle).  The result is still a tree-based network because if $\{v_3,v_3'\}$ is not in $T$, then Lemma~\ref{l:tree-based-preserving-moves}(ii) applies, whereas if $\{v_3,v_3'\}\in T$, then Lemma~\ref{l:tree-based-preserving-moves}(iii) applies.
This network has a cycle $(v_1,v_3, v_4,\dots,v_i,v_1)$, which is of strictly shorter length {than the original cycle (because it does not contain $v_2$ anymore)}.

In this way we can use NNI to shorten any cycle in $N$ of length greater than 3, and so eventually obtain a triangle.  Deleting that triangle via a triangle move remains tree-based, which completes the proof.
\end{proof}

We end by noting that while the diameter of the full space of tree-based networks on $n$ leaves is infinite, since the tier is unbounded, the distance between a pair of networks in different tiers is also $\mathcal O(n^2)$.

\begin{cor}
The distance between networks $N_1\in\TBN(n,k_1)$ and $N_2\in\TBN(n,k_2)$, for fixed $k_1$ and $k_2$ such that $k_1\le k_2$, is in $\mathcal O(n^2)$.
\end{cor}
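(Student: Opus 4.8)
The plan is to reduce the statement to the single-tier diameter bound already established in the preceding Corollary. Since $k_1\le k_2$, the cleanest route is to lift $N_1$ \emph{up} to tier $k_2$ at negligible cost, and then travel from the lifted network to $N_2$ entirely within tier $k_2$, where the diameter is known to be $\mathcal{O}(n^2)$.

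First I would repeatedly apply the blow-up operation $\Delta^+$ to $N_1$. By Lemma~\ref{l:triangle.moves}(1), for any internal vertex $v\in V(N_1)\setminus X$ we have $\Delta^+(N_1,v)\in\TBN(n,k_1+1)$, so a single blow-up raises the tier by one while preserving tree-basedness. Iterating $\Delta^+$ a total of $k_2-k_1$ times, each time at an arbitrary internal vertex of the current network, produces a network $N_1'\in\TBN(n,k_2)$ using exactly $k_2-k_1$ triangle moves. The essential observation is that this count is independent of $n$: unlike a collapse, which in general must be prepared by shortening a cycle to a triangle via NNI moves (as in the proof of Theorem~\ref{t:all.TBN.connected}), a blow-up is available at \emph{any} internal vertex and requires no NNI moves to set up. Hence, for fixed $k_1$ and $k_2$, bridging the tier gap contributes only the constant $k_2-k_1=\mathcal{O}(1)$ to the total distance.

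Next, both $N_1'$ and $N_2$ lie in $\TBN(n,k_2)$, which is connected under NNI by Theorem~\ref{t:TBN.NNI.connected}, with diameter at most $f(n,k_2)$ by the preceding Corollary. For fixed $k_2$ the terms $4nk_2$ and $4n$ are linear, the terms $4k_2^2$, $-4k_2$, $-17$ and $\binom{k_2-1}{2}$ are constant, while $\binom{n-2}{2}=\mathcal{O}(n^2)$ dominates; thus $f(n,k_2)=\mathcal{O}(n^2)$. So $N_1'$ can be transformed into $N_2$ in $\mathcal{O}(n^2)$ NNI moves that stay within $\TBN(n,k_2)$. Combining the two phases, the distance between $N_1$ and $N_2$ is at most $(k_2-k_1)+f(n,k_2)=\mathcal{O}(1)+\mathcal{O}(n^2)=\mathcal{O}(n^2)$, as claimed.

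I do not expect a genuine obstacle: the statement is essentially an immediate corollary of the earlier diameter bound together with Lemma~\ref{l:triangle.moves}. The one point I would be careful to flag is the asymmetry between the two triangle operations — I deliberately lift the lower-tier network upward rather than bringing $N_2$ down, precisely because blow-ups are always applicable and so the fixed tier gap costs only a constant, leaving the quadratic within-tier diameter as the dominant term.
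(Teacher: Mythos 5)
Your proof is correct and follows essentially the same route as the paper: lift $N_1$ to tier $k_2$ via $k_2-k_1$ blow-up triangle moves (which preserve tree-basedness by Lemma~\ref{l:triangle.moves}), then apply the within-tier diameter bound $f(n,k_2)=\mathcal{O}(n^2)$, giving a total of $f(n,k_2)+(k_2-k_1)$. Your additional remark on why one lifts upward rather than collapsing downward is a sensible clarification, but the argument is the same.
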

\begin{proof}
Given $N_1\in\TBN(n,k_1)$, it can be transformed into a network in $\TBN(n,k_2)$ by successively inserting triangles at vertices $k_2-k_1$ times.  Once in tier $k_2$, it is at most $f(n,k_2)$ moves from $N_2$, and so the distance between $N_1$ and $N_2$ is at most $f(n,k_2)+(k_2-k_1)$.  Since $f(n,k_2)$ is in $\mathcal O(n^2)$ for fixed $k_2$, the result follows.
\end{proof}
 
\section{Discussion}\label{s:discussion}

In this paper we have shown that the space of unrooted binary tree-based networks within a given tier is connected under NNI moves, and that if the triangle moves are included the whole space of unrooted binary tree-based networks is connected (Theorems~\ref{t:TBN.NNI.connected} and~\ref{t:all.TBN.connected}).
This connectivity gives rise to several possible applications and interesting additional questions.

The connectivity of the space means that the class of tree-based networks can be searched by use of NNI moves. This has potential benefits for the sampling of the space using MCMC approaches, as it already does for {phylogenetic trees \cite{Lakner_2008,Hoehna2011,Whidden_2015} and in future most likely also will for general phylogenetic networks \cite{husonrupp}}. 
To further develop this application, it would be interesting to obtain estimates of the size of the tree-based neighbourhood of a given network in $\TBN(n,k)$.

The alert reader may note that the upper bound on the diameter of the space of all tier-$k$ unrooted phylogenetic networks $N(n,k)$ obtained in~\cite{francis2018bounds} is of a similar order to the upper bound on the diameter for the space of tier-$k$ tree-based networks $\TBN(n,k)$ obtained here.  While the density of $\TBN(n,k)$ in $N(n,k)$ is 1 for $n\le 3$, because the sets are equal~\cite{fischer2018nonbinary}, it is likely that the density decreases markedly as $n$ grows. That said, this is unknown and would be interesting to establish. The similarity in the diameters of the spaces could be due to the fact that two tree-based networks might be close in $N(n,k)$, but distant in $\TBN(n,k)$, because the distance in the latter requires a path through networks that remain tree-based.  Or it could be due to the subspace of tree-based networks making up the majority of the space of networks, which seems unlikely but has not been ruled out.  Finally it could be simply that there are tree-based networks at the ``extremities'' of the space of all networks, and it just happens that they are dispersed widely within the space.

There are questions relating to the proximity measures introduced in~\cite{fischer2019how} (generalizing those for rooted networks in~\cite{francis2018new}).  In that paper, networks are given ranks according to their distance from the boundary of the space of tree-based networks.  The boundary of $\TBN(n,k)$ is the subset of tree-based networks that can be made non-tree-based with a single NNI move.  Networks in the boundary are said to have \emph{tree-based rank} 0.  Tree-based networks that are at most $i$ NNI moves from the boundary are rank $i$, and non-tree-based networks that are at most $i$ NNI moves from the boundary are rank $-i$.  {For instance, the two non-tree-based tier-5 networks in Figure \ref{two_non_tb_networks} (taken from~\cite[Figure~7]{fischer2018nonbinary}) are rank $-1$.}

\begin{figure}[ht]
\includegraphics[width=12cm]{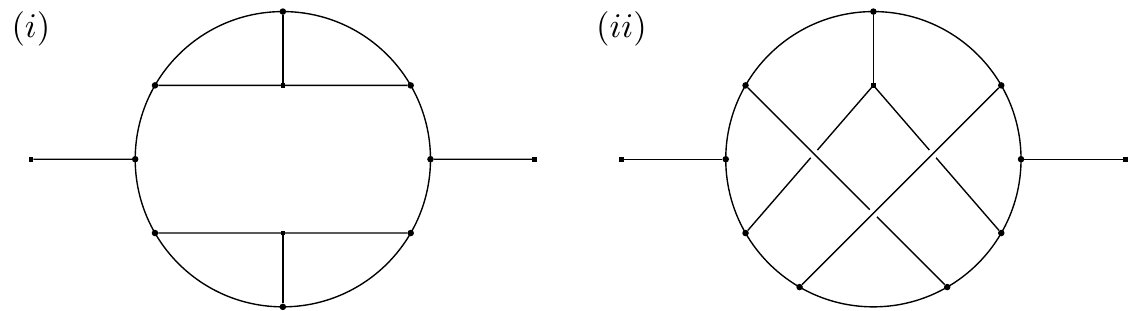}
\caption{
 The only two binary, non-tree-based phylogenetic networks of tier~\cite[Figure~7]{fischer2018nonbinary}. These are both rank $-1$, that is, for both of them it takes only one NNI move to make them tree-based. However, no single NNI move can convert one of these networks into the other one. 
 }\label{two_non_tb_networks}
\end{figure}

These two are not connected by a single NNI move\footnote{We used the computer algebra system Mathematica \cite{Mathematica} to verify that the shortest path from network (i) in Figure \ref{two_non_tb_networks} to a network isomorphic to (ii) requires 5 NNI moves. While this is tricky to see, it is combinatorially rather easy to see that network (i) has 24 1-step NNI neighbours, which can be divided into two classes {(i.e. the NNI neighbourhood of network (i) contains only two non-isomorphic networks)}: {those isomorphic to (i), and those isomorphic to a specific different network, which is} in fact tree-based. So network (ii) cannot be in the 1-step neighborhood of (i).}, so the set of networks of tree-based rank $-1$ is not connected.  Is this also the case for other negative ranks?  And especially, what about networks of positive tree-based rank, in the ``core'', as it were, of the space of tree-based networks?
{While it is certainly of interest to use the results of ~\cite{fischer2019how} to analyze the structure of the space of binary tree-based networks more in-depth, note that in~\cite{fischer2019how}, the authors discuss general tree-based networks, that is, non-binary ones. On the other hand, the results of the present manuscript are limited to the binary case. This is not merely due to the fact that binary networks play a fundamental role in mathematical phylogenetics. On the contrary: It can be easily seen that the space of \emph{non-binary} tree-based networks is \emph{not} connected under NNI. To see this, note that NNI does not change the total degree of the underlying graph, which is why an NNI move cannot connect networks whose total degrees differ.}

A direct conclusion from the present manuscript is the fact that Theorem~\ref{t:TBN.NNI.connected} shows that $\TBN(n,k)$ is also connected under the related SPR and TBR moves, which have recently been defined for phylogenetic networks, because all NNI moves are also SPR and TBR moves~\cite[Lemma 7]{francis2018bounds}. But we have left open the question of the diameter of the space under these moves, except to note that clearly it is also within $\mathcal O(n^2)$, and so may be smaller.

{However, there are two more open questions arising from this manuscript: First, our approach of using NNI to go from a binary unrooted network to a lineal network, then to a shoat network and then modify the shoat network, go back to a lineal network and finally arrive at the target network does probably not lead to a shortest path. The approach is constructive as it shows how to construct a valid tree-based path from any binary tree-based network $N$ to any other one $N_2$, but there may be shorter paths from $N_1$ to $N_2$. In particular, it might be possible to avoid shoat networks altogether and directly move from one lineal network to another one. If this was possible, it would reduce the complexity of our construction.}

Second, it should be noted that NNI moves on phylogenetic networks have been introduced in the \emph{rooted} setting~\cite{gambette2017rearrangement}, and of course this is also the original setting for the introduction of tree-based phylogenetic networks~\cite{francis2015phylogenetic}.  While there are concrete connections between rooted and unrooted tree-based networks (eg~\cite[Section~6]{fischer2019how}), it is not clear whether the results in the present paper will lift to the rooted context.

\section{Acknowledgements} 
MF wishes to thank the DAAD for conference travel funding to the Annual New Zealand Phylogenomics Meeting, where partial results leading to this manuscript were achieved.

\bibliographystyle{plain}

\end{document}